\documentclass[11pt]{article}
\usepackage[left=1in, right=1in, top=1in, bottom=1in, margin=1in]{geometry}

\usepackage{algorithm}
\usepackage{tcolorbox}
\usepackage{amsmath,amssymb,xspace,graphicx,relsize,bm,xcolor,breqn,algpseudocode,multirow}

\usepackage{tikz}
\usetikzlibrary{quantikz2}
\usepackage{adjustbox}

\usepackage{tcolorbox}
\newcommand{\Exp}{\mathop{\mathbb{E}}}

\usepackage[margin=1in]{geometry}

\usepackage{amsthm}
\usepackage{dsfont}
\usepackage{array}
\usepackage{makecell}

\newcommand{\Sh}{\ensuremath{\mathsf{Stab}}}

\newcommand{\poly}{\ensuremath{\mathsf{poly}}}

\usepackage{enumerate}

\usepackage[pagebackref]{hyperref}
\usepackage[usestackEOL]{stackengine}
\usepackage{thm-restate,mathrsfs}
\usepackage{enumerate}
\usepackage{array}
\usepackage{parskip}
\def\01{\{0,1\}}

\hypersetup{
	colorlinks,
	linkcolor={blue!100!black},
	citecolor={red!100!black},
}

\newcommand{\be}{\begin{equation}}
\newcommand{\ee}{\end{equation}}
\newcommand{\ba}{\begin{array}}
\newcommand{\ea}{\end{array}}
\newcommand{\bea}{\begin{eqnarray}}
\newcommand{\eea}{\end{eqnarray}}



\usepackage{mathtools}


\DeclareMathOperator{\Tr}{Tr}
\newcommand{\ra}{\rangle}
\newcommand{\la}{\langle}

\newcommand{\calF}{{\cal F }}

\newcommand{\calS}{{\cal S }}

\newcommand{\calP}{{\cal P }}

\newcommand{\FF}{\mathbb{F}}

\newcommand{\gowers}[2]{\textsc{Gowers}\left({#1},{#2}\right)}

\declaretheorem[numberwithin=section]{theorem}

\declaretheorem[sibling=theorem]{problem}
\declaretheorem[sibling=theorem]{claim}

\declaretheorem[sibling=theorem]{corollary}

\theoremstyle{definition}

\declaretheorem[sibling=theorem]{fact}

\usepackage{thm-restate} 

\makeatletter
\def\widebreve{\mathpalette\wide@breve}
\def\wide@breve#1#2{\sbox\z@{$#1#2$}%
     \mathop{\vbox{\m@th\ialign{##\crcr
\kern0.08em\brevefill#1{0.8\wd\z@}\crcr\noalign{\nointerlineskip}%
                    $\hss#1#2\hss$\crcr}}}\nolimits}
\def\brevefill#1#2{$\m@th\sbox\tw@{$#1($}%
  \hss\resizebox{#2}{\wd\tw@}{\rotatebox[origin=c]{90}{\upshape(}}\hss$}
\makeatletter
\setcounter{tocdepth}{2}

\title{A note on polynomial-time tolerant testing stabilizer states}

\begin{document}

\author{
Srinivasan Arunachalam\\[2mm]
IBM Quantum\\
\small Almaden Research Center\\
\small \texttt{Srinivasan.Arunachalam@ibm.com}
\and
Sergey Bravyi\\[2mm]
IBM Quantum\\
\small IBM T. J. Watson Research Center\\
\small   Yorktown Heights, NY\\
\small \texttt{sbravyi@us.ibm.com}
\and
Arkopal Dutt\\[2mm]
IBM Quantum\\
\small Cambridge, MA\\
\small \texttt{arkopal@ibm.com}
}

\maketitle
\begin{abstract}
We show an improved inverse theorem for the Gowers-$3$ norm of $n$-qubit quantum states $\ket{\psi}$ which states that: for every $\gamma\geq 0$, if the $\gowers{\ket{\psi}}{3}^8 \geq \gamma$ then the stabilizer fidelity of $|\psi\ra$
is at least $\gamma^C$ for some constant $C>1$. This implies a constant-sample polynomial-time \emph{tolerant} testing algorithm for stabilizer states which accepts if an unknown state is $\varepsilon_1$-close to a stabilizer state in fidelity and rejects when $\ket{\psi}$ is  $\varepsilon_2 \leq \varepsilon_1^C$-far from all stabilizer states, promised one of them is the case. 
\end{abstract}

\section{Introduction}
We study the problem of tolerant testing stabilizer states which can be formalized as follows.
\begin{problem}[Tolerant testing]\label{prob:testing_stab}
Given access to copies of an unknown quantum state $\ket{\psi}$ with the promise $(i)$ $\ket{\psi}$ is $\varepsilon_1$-close to a stabilizer state in fidelity or $(ii)$ $\ket{\psi}$ is  $\varepsilon_2$-far from all stabilizer states, decide which is the case.
\end{problem}
Earlier works on this question considered the intolerant framework~\cite{gross2021schur,grewal2023improved} where $\varepsilon_1$ is either $1$ or a large constant, and $\varepsilon_2$ is decided accordingly.  Recently, \cite{ad2024tolerant} showed that there is a tolerant tester which could handle any $\varepsilon_1 \geq \tau$ and $\varepsilon_2 \leq 2^{-\poly(1/\tau)}$ with sample complexity $\poly(1/\varepsilon_1)$ but required a conjecture in additive combinatorics for arbitrary quantum states or a promise that the unknown state is a phase state (in which case $\varepsilon_2 \leq \varepsilon_1^{C'}$ for a constant $C' > 1$ could be handled). This was achieved by introducing Gowers norm for quantum states and showing an inverse theorem for the Gowers-$3$ norm of quantum states. 

\paragraph{Gowers norm and inverse theorems}
For any $n$-qubit quantum state $\ket{\psi}=\sum_{x \in \{0,1\}^n} f(x)\ket{x}$ where $f=(f(x))_x$ is an $\ell_2$-normed vector, we define its Gowers-$k$ norm~\cite{ad2024tolerant} as follows  
\begin{equation}
    \gowers{\ket{\psi}}{k} = 2^{n/2} \left[ \Exp_{x,h_1,h_2,\ldots,h_k \in \{0,1\}^n} \prod_{\omega \in \{0,1\}^k} C^{|\omega|} f(x + \omega \cdot h) \right]^{1/2^{k}},
\end{equation}
where $C^{|\omega|} f = f$ if $|\omega| := \sum_{j \in [k]} \omega_k$ is even and is $\overline{f}$ if $|\omega|$ is odd with $\overline{f}$ denoting the complex conjugate of $f$. We refer the interested reader to~\cite{ad2024tolerant} for more on Gowers norms. An inverse theorem for Gowers norms of quantum states asks the following: if a state $\ket{\psi}$ has high Gowers-$k$ norm, does it imply that $\ket{\psi}$ has high fidelity with a certain class of states? In \cite{ad2024tolerant}, this was answered affirmatively for the Gowers-$3$ norm.
\begin{theorem}[\cite{ad2024tolerant}]\label{thm:inversegowersstates}
    Let $\gamma\in [0,1]$. There exists a constant $C>1$ such that the following is true. Assuming a conjecture in additive combinatorics,\footnote{We refer the interested reader to~\cite[Conjecture~1.2]{ad2024tolerant} for the conjecture statement.} if $\ket{\psi}$ is an $n$-qubit quantum state such that $\gowers{\ket{\psi}}{3}^8 \geq \gamma$, then there is an $n$-qubit stabilizer state $\ket{\phi}$ such that $|\la \psi | \phi \ra|^2 \geq 2^{-\Omega(\gamma^{-C})}$.
\end{theorem}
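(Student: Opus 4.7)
The natural approach is to lift the classical inverse-Gowers machinery to the quantum regime. Setting $g(x) := 2^{n/2} f(x)$ so that $g : \mathbb{F}_2^n \to \mathbb{C}$ satisfies $\EE_x |g(x)|^2 = 1$ and $\|g\|_{U^3}^{8} = \gowers{\ket{\psi}}{3}^8 \geq \gamma$ in the standard classical Gowers normalization, one immediately faces a complication: $g$ may be extremely spiky, since stabilizer-state amplitudes take only the values $0$ or $2^{k/2}$ for some $k$, so $\|g\|_\infty$ can be as large as $2^{n/2}$. This is precisely the regime where the standard bounded inverse theorems over $\mathbb{F}_2^n$ in the style of Samorodnitsky and Tao--Ziegler do not directly apply, and it is what motivates the $U^3$-inverse conjecture of [ad2024tolerant, Conj.~1.2] targeting $L^2$-normalized rather than $L^\infty$-bounded functions.

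The plan is to invoke this conjecture to extract (i) a non-classical quadratic $Q : \mathbb{F}_2^n \to \mathbb{Z}/4$ of bounded complexity and (ii) an affine subspace $V \subseteq \mathbb{F}_2^n$ of codimension $\poly(1/\gamma)$ such that
\[
    \Bigl|\, \frac{1}{|V|}\sum_{x \in V} g(x)\, i^{-Q(x)}\,\Bigr| \;\geq\; \gamma^{O(1)}.
\]
Form the unit vector $\ket{\widetilde{\phi}} := |V|^{-1/2} \sum_{x\in V} i^{Q(x)} \ket{x}$. A direct rearrangement of the above then gives $|\la\psi|\widetilde{\phi}\ra|^2 \geq \gamma^{O(1)} \cdot |V|/2^n \geq 2^{-\poly(1/\gamma)}$.

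The vector $\ket{\widetilde{\phi}}$ need not itself be a stabilizer state: stabilizer amplitudes on an affine subspace carry a quadratic phase whose cross-term coefficients on $x_i x_j$ ($i\neq j$) lie in $2\mathbb{Z}/4$, whereas the $Q$ returned by inverse Gowers may have odd cross-term coefficients. Decomposing $Q = Q_{\mathrm{Cliff}} + R$ accordingly, $\ket{\widetilde{\phi}}$ is obtained from a stabilizer state by a circuit of $CS$-type gates whose number is controlled by the complexity of $Q$, hence by $\poly(1/\gamma)$. Replacing each such gate by a constant-fidelity-loss Clifford substitution and aggregating produces an honest stabilizer state $\ket{\phi}$ with $|\la\widetilde{\phi}|\phi\ra|^2 \geq 2^{-\poly(1/\gamma)}$. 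Combining with the previous bound via Cauchy--Schwarz for fidelities yields $|\la\psi|\phi\ra|^2 \geq 2^{-\Omega(\gamma^{-C})}$, as claimed.

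The main obstacle is the $L^2$-versus-$L^\infty$ distinction in the underlying inverse theorem: classical results assume $\|g\|_\infty \leq 1$, which fails badly for concentrated quantum amplitudes, and this is what forces reliance on the additive-combinatorics conjecture rather than a theorem. A secondary obstacle is the passage from a non-classical quadratic phase state to an honest stabilizer state; the accumulated ``$CS$-magic'' cost is what degrades the final bound from polynomial in $\gamma$ to exponential in $\gamma^{-C}$. The present paper's contribution of the sharper $\gamma^C$ bound, without the conjecture, must bypass both steps via a more direct argument.
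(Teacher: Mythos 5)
This theorem is imported from \cite{ad2024tolerant}; the present paper does not reprove it, but Section~2 makes the architecture of that proof visible, and it is not the one you propose. The argument of \cite{ad2024tolerant} (and of the improvement in this note) runs in \emph{phase space}: one relates $\gowers{\ket{\psi}}{3}^8$ to $\Exp_{x\sim q_\Psi}[2^n p_\Psi(x)]$ for the characteristic/Weyl distributions over strings in $\FF_2^{2n}$, extracts a small-doubling subset of the level set $X=\{x: 2^n p_\Psi(x)\geq \gamma/4\}$ and then a subgroup $V\subseteq\FF_2^{2n}$ with $\Exp_{y\in V}[2^n p_\Psi(y)]\geq\poly(\gamma)$ (Theorem~\ref{thm:existence_V}), and finally converts mass of $p_\Psi$ on a Lagrangian subspace into stabilizer fidelity (Fact~\ref{fact:lower_bound_stabilizer_fidelity_pPsi_lagrangian_subspace}). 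The ``conjecture in additive combinatorics'' is a statement deployed inside that phase-space argument, not the $L^2$-normalized $U^3$-inverse theorem for the amplitudes $f:\FF_2^n\to\CC$ that you posit. Your diagnosis of the $L^\infty$-versus-$L^2$ obstruction is correct and is indeed why the naive amplitude-space approach fails for general states, but the actual resolution is to change representations (amplitudes $\to$ Weyl coefficients), not to conjecture an unbounded inverse theorem.

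Even taken on its own terms, your argument has two concrete gaps. First, the Cliffordization step: no standard form of a $U^3$-inverse statement bounds the number of genuinely non-classical ($CS$-type) cross terms of $Q$ by $\poly(1/\gamma)$; the rank of the odd part of $Q$ can be $\Omega(n)$, so a constant fidelity loss per replaced gate yields $2^{-\Omega(n)}$, which is vacuous as a bound independent of $n$. Second, and more fundamentally, the final chaining step is invalid: from $|\la\psi|\widetilde{\phi}\ra|^2\geq\delta_1$ and $|\la\widetilde{\phi}|\phi\ra|^2\geq\delta_2$ with $\delta_1,\delta_2$ both far from $1$ one cannot conclude anything about $|\la\psi|\phi\ra|$. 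For instance $\ket{\psi}=\ket{0}$, $\ket{\widetilde{\phi}}=(\ket{0}+\ket{1})/\sqrt{2}$, $\ket{\phi}=\ket{1}$ gives both pairwise fidelities equal to $1/2$ while $\la\psi|\phi\ra=0$; there is no ``Cauchy--Schwarz for fidelities'' that chains two small overlaps into a positive one. Repairing this would require, e.g., expanding $\ket{\widetilde{\phi}}$ as a superposition of few stabilizer states and selecting the term of largest overlap with $\ket{\psi}$, which again hinges on the unproven $\poly(1/\gamma)$ bound on the number of non-Clifford gates.
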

In this note, we extend this result in two ways: $(i)$ we remove the requirement of the additive combinatorics conjecture (although we still believe it might be true and interesting independently), and $(ii)$ we improve the lower bound on stabilizer fidelity to $\poly(\gamma)$. In more detail, we show the theorem above can be extended to obtain the following inverse Gowers-$3$ theorem for states.
\begin{theorem}\label{thm:improved_inverse_gowers3}
Let $\gamma\in [0,1]$. If $\ket{\psi}$ is an $n$-qubit quantum state such that $\gowers{\ket{\psi}}{3}^8 \geq \gamma$, then there is an $n$-qubit stabilizer state $\ket{\phi}$ such that $|\la \psi | \phi \ra|^2 \geq \gamma^C$ for some constant $C>1$.
\end{theorem}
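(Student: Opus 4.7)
The plan is to revisit the proof of Theorem~\ref{thm:inversegowersstates} from~\cite{ad2024tolerant}, locate the step at which the additive-combinatorics conjecture was invoked, and replace it with the now-unconditional polynomial Freiman--Ruzsa theorem over $\mathbb{F}_2^n$ of Gowers, Green, Manners and Tao. Because PFR provides \emph{polynomial} rather than quasi-polynomial control, this substitution should simultaneously discharge the conjecture and strengthen the stabilizer-fidelity bound from $2^{-\Omega(\gamma^{-C})}$ to $\gamma^{O(1)}$, as required.

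The starting point is the observation that $\gowers{|\psi\rangle}{3}$ equals the classical $U^3$ Gowers norm of the $L^2$-normalized amplitude function $F(x) = 2^{n/2}\langle x|\psi\rangle$, which satisfies $\Exp_x |F(x)|^2 = 1$ but need not be $L^\infty$-bounded. To bring $F$ into the classical framework I would introduce a dyadic truncation at threshold $T = \poly(1/\gamma)$, writing $F = F_{\le} + F_{>}$ according to whether $|F|\le T$, and use a Cauchy--Schwarz / triangle-inequality argument for $U^3$ to show that one piece carries an $\Omega(\gamma)$ fraction of the Gowers-$3$ mass. If the ``spike'' piece dominates, its support has density at most $T^{-2}$ by Markov, and a single computational-basis vector inside that support already gives a stabilizer state of fidelity $\poly(\gamma)$ with $|\psi\rangle$. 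Otherwise $F_{\le}/T$ is a complex $L^\infty$-bounded function with $\|F_{\le}/T\|_{U^3}\ge \poly(\gamma)$.

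To the bounded piece I would apply the polynomial inverse $U^3$ theorem for complex bounded functions on $\mathbb{F}_2^n$, now unconditional thanks to PFR: if $\|g\|_\infty\le 1$ and $\|g\|_{U^3}\ge \delta$, there exists a quadratic form $q:\mathbb{F}_2^n\to\mathbb{Z}_4$ with $|\Exp_x g(x)\,\overline{i^{q(x)}}|\ge \delta^{O(1)}$. Applied to $F_{\le}/T$, this produces a quadratic phase with $\poly(\gamma)$ correlation with $F$, and reinterpreting the classical correlation as a quantum fidelity yields a full-support stabilizer state $|\phi_q\rangle = 2^{-n/2}\sum_x i^{q(x)}|x\rangle$ with $|\langle\psi|\phi_q\rangle|^2 \ge \gamma^C$, completing the proof.

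The main obstacle will be executing the dyadic decomposition without destroying polynomial scaling: because quantum amplitudes are only $L^2$-normalized, $|F|$ can exhibit arbitrarily sharp spikes, and a naive application of Cauchy--Schwarz to $\|F_{\le}\|_{U^3}$ can accrue $n$-dependent losses that would swamp the polynomial bound. Controlling this will likely require either a state-dependent choice of $T$ with a multi-scale version of the decomposition, or structural facts about high-$U^3$ functions that rule out pathological amplitude profiles; this is precisely the step at which~\cite{ad2024tolerant} needed their additive-combinatorics conjecture. A secondary technical obstacle is isolating the precise form of the polynomial $U^3$ inverse theorem over $\mathbb{F}_2^n$ that is needed here, and confirming that it indeed follows from the Gowers--Green--Manners--Tao resolution of PFR rather than only from its earlier quasi-polynomial precursors.
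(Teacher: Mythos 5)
Your strategy is genuinely different from the paper's, but as written it has a fatal quantitative gap in the ``spike'' case, which is exactly the case that makes this problem hard for general quantum states. If $|F(x)|=2^{n/2}|\la x|\psi\ra|>T$ for some $x$, then the fidelity of $\ket{\psi}$ with the computational basis state $\ket{x}$ is $|F(x)|^{2}\,2^{-n}>T^{2}2^{-n}$, which is exponentially small in $n$, not $\poly(\gamma)$; so a single basis vector inside the support of $F_{>}$ does not give the required witness. A concrete obstruction: take $\ket{\psi}=|A|^{-1/2}\sum_{x\in A}\ket{x}$ for a subspace $A\subseteq\FF_2^n$ of dimension $n/2$. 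This is a stabilizer state, so $\gowers{\ket{\psi}}{3}=1$, and its amplitude function is entirely ``spike'' for every threshold $T\le 2^{n/4}$; yet every computational basis state has fidelity $2^{-n/2}$ with it, and every full-support quadratic phase state $2^{-n/2}\sum_x i^{q(x)}\ket{x}$ has fidelity at most $|A|/2^{n}=2^{-n/2}$. Neither branch of your dichotomy can produce a $\poly(\gamma)$-fidelity stabilizer state here, because the correct witness (a subspace-supported state with quadratic phase) is outside the output class of the classical inverse $U^3$ theorem as you invoke it. The unboundedness of the amplitudes is the heart of the difficulty, and the truncation does not resolve it; you flag this as the main obstacle, but the fallback you sketch is not a proof. (Also, the conjecture in the earlier work was a Freiman--Ruzsa-type statement used in phase space, not a step about truncating amplitudes.)

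The paper's route sidesteps amplitudes entirely: it works in the $2n$-dimensional phase space with the characteristic distribution $p_\Psi(x)=2^{-n}\Tr(W_x\Psi)^{2}$. From the large Gowers-$3$ norm it imports, unconditionally, a subgroup $V\subseteq\FF_2^{2n}$ with $\Exp_{y\in V}[2^{n}p_\Psi(y)]\ge\poly(\gamma)$ (Theorem~\ref{thm:existence_V}); the new ingredient (Theorem~\ref{thm:stabilizer_covering_group}) is that any such $V$ --- which need not be isotropic --- is covered by $\poly(1/\gamma)$ stabilizer groups. This is proved by conjugating $V$ into the canonical form $\la Z_1,X_1,\ldots,Z_k,X_k,Z_{k+1},\ldots,Z_{k+m}\ra$ by a Clifford and using a unitary $1$-design/purity bound to force $k=O(\log(1/\gamma))$, so that $4^{k}=\poly(1/\gamma)$ stabilizer groups suffice; pigeonholing the mass onto one of them and applying Fact~\ref{fact:lower_bound_stabilizer_fidelity_pPsi_lagrangian_subspace} finishes the proof. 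If you wish to salvage an amplitude-side argument, you would need an inverse theorem whose output class includes affine-subspace-supported quadratic phase states together with a decomposition tracking where the $L^{2}$ mass concentrates --- which is essentially what the phase-space argument accomplishes implicitly.
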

We reemphasize that the Gowers-$3$ norm takes values in $[0,1]$ and note that 
$\gowers{\ket{\psi}}{3}=1$
if and only $\ket{\psi}$ is a stabilizer state~(see Theorem~3.4 of \cite{ad2024tolerant}). The above theorem is also true provided $\Exp_{x \sim q_\Psi}[2^n p_\Psi (x)] \geq \gamma$ (where $p_\Psi$ is the characteristic distribution and $q_\Psi$ is the Weyl distribution~\cite{gross2021schur}) for a slightly different polynomial factor. The above inverse theorem implies a tolerant tester which can handle a broad regime of error parameters as stated as follows. Below, let $\Sh$ be the set of all $n$-qubit stabilizer states. 
\begin{theorem}\label{thm:improved_tolerant_tester}
There exists a constant $C>1$ such that the following is true. For any $\varepsilon_1>0$ there exists an algorithm that~given $\poly( 1/\varepsilon_1)$ copies of an $n$-qubit quantum state $\ket{\psi}$, decides if $\max_{\ket{\phi}\in \Sh}|\langle\phi |\psi\rangle|^2 \geq  \varepsilon_1$ or $\max_{\ket{\phi}\in \Sh}|\langle\phi |\psi\rangle|^2 \leq  \varepsilon_2$ for $\varepsilon_2 \leq \varepsilon_1^C$, using $O(n\cdot \poly(1/\varepsilon_1))$~gates.
\end{theorem}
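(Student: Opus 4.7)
The plan is to reduce the testing problem to estimating the single scalar ``proxy for the Gowers-$3$ norm''
\[
\gamma(\ket\psi) \;:=\; \Exp_{x \sim q_\Psi}\!\bigl[2^n p_\Psi(x)\bigr] \;=\; \Exp_{x \sim q_\Psi}\!\bigl[\la\psi|P_x|\psi\ra^2\bigr],
\]
for which Theorem~\ref{thm:improved_inverse_gowers3} applies in the $\gamma$-form stated in the remark after it. The estimator I have in mind uses Bell difference sampling~\cite{gross2021schur,grewal2023improved} to draw samples $x_1,\ldots,x_k \in \FF_2^{2n}$ from the Weyl distribution $q_\Psi$, at a cost of $4$ copies of $\ket\psi$ and $O(n)$ gates per sample. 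For each $x_i$, one spends $O(1/\eta^2)$ further copies of $\ket\psi$ to estimate $\la\psi|P_{x_i}|\psi\ra$ to additive precision $\eta$ by direct Pauli measurement, and squares the result. Averaging the resulting bounded random variables over $k = O(1/\eta^2)$ Bell samples yields, by Hoeffding, an estimator $\widehat{\gamma}$ with $|\widehat{\gamma} - \gamma(\ket\psi)| \leq O(\eta)$ with constant success probability, at total cost $\poly(1/\eta)$ copies of $\ket\psi$ and $O(n\cdot\poly(1/\eta))$ gates.

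Next I would connect the two promise cases to $\gamma$. The soundness direction is the contrapositive of Theorem~\ref{thm:improved_inverse_gowers3}: there is a constant $C'>1$ such that $\max_{\ket\phi \in \Sh}|\la\phi|\psi\ra|^2 < \delta^{C'}$ implies $\gamma(\ket\psi) < \delta$. The completeness direction requires showing that $\max_{\ket\phi \in \Sh}|\la\phi|\psi\ra|^2 \geq \varepsilon_1$ implies $\gamma(\ket\psi) \geq \varepsilon_1^a$ for some absolute constant $a$; the starting point is the Cauchy--Schwarz estimate $\sum_{P \in S(\phi)} \la\psi|P|\psi\ra^2 \geq 2^n\varepsilon_1^2$ on the isotropic subspace $S(\phi) \subset \FF_2^{2n}$ corresponding to the near-optimal $\ket\phi$, and one propagates this subspace-localized mass through the cubic form defining $\gamma$ using the subgroup closure of $S(\phi)$. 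Choosing $C := aC'+1$, threshold $\theta := \tfrac{3}{4}\varepsilon_1^a$, and precision $\eta := \tfrac{1}{8}\varepsilon_1^a$, and accepting iff $\widehat\gamma \geq \theta$, separates the two cases using $\poly(1/\varepsilon_1)$ copies of $\ket\psi$ and $O(n\cdot\poly(1/\varepsilon_1))$ gates, as required.

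The main input is Theorem~\ref{thm:improved_inverse_gowers3} itself, which we are allowed to assume. On top of that, I expect the completeness lower bound $\gamma(\ket\psi) \geq \varepsilon_1^a$ to be the hard part: a naive term-by-term expansion of the cubic form only yields $\gamma \gtrsim \varepsilon_1^{O(1)}/2^n$, so an $n$-independent bound genuinely requires using the subspace structure of $S(\phi)$---for instance via Fourier analysis on $S(\phi)$, a Schur-triples counting argument inside a large-mass subset of the subspace, or a direct polynomial perturbation argument exploiting the fact that $\la\psi|P|\psi\ra = \varepsilon_1 + (1-\varepsilon_1)\la\phi^\perp|P|\phi^\perp\ra$ on $P \in S(\phi)$ after the decomposition $\ket\psi = \alpha\ket\phi + \beta\ket{\phi^\perp}$. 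The Bell-sampling estimator and the routing of the constants through the inverse theorem are standard in the stabilizer-testing literature and should transfer with no new ideas.
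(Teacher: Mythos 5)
Your proposal matches the paper's proof essentially exactly: the paper's tester likewise estimates $\Exp_{x \sim q_\Psi}\bigl[|\la \psi | W_x | \psi \ra|^2\bigr]$ via Bell difference sampling, uses the contrapositive of the new inverse theorem (in its $\Exp_{x\sim q_\Psi}[2^n p_\Psi(x)]$ form, Corollary~\ref{thm:inversegowersstates_weyl_expectations}) for soundness, and thresholds appropriately. The completeness bound you flag as the hard part is not reproved in the paper either --- it is the known inequality $\Exp_{x \sim q_\Psi}[2^n p_\Psi(x)] \geq \calF_\calS(\ket{\psi})^6$ from \cite{grewal2023improved}/\cite{ad2024tolerant}, so your $\varepsilon_1^a$ holds with $a=6$ and no new argument is needed there.
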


\section{Proof of main result}
\subsection{Notation} 
In this section, we discuss the notation and prior results relevant for this note. Additionally, we denote $[m] = \{1,2,\ldots,m\}$ for an integer $m \geq 1$. We will work with the Weyl operators $W_x$ denoted by $(2n)$-bit strings $x = (a,b) \in \{0,1\}^{2n}$ given by
\begin{equation}
    W_x = i^{a \cdot b} \mathop{\bigotimes}_{j=1}^n X^{a_j} Z^{b_j}
    \label{eq:weyl_op}
\end{equation}
We denote the set of all $n$-qubit Weyl operators or $\{W_x\}_{x \in \FF_2^{2n}}$ as $\calP^n$. The Weyl operators are convenient to work with as they are orthogonal under the Hilbert-Schmidt inner product and allows us to write any quantum state in its Weyl (or Pauli) decomposition as
\begin{equation}
\ket{\psi}\bra{\psi} = \frac{1}{2^n} \sum_{x \in \{0,1\}^{2n}} \alpha_x W_x,
\label{eq:weyl_decomposition}
\end{equation}
where the coefficients $\alpha_x = \Tr(W_x \ket{\psi}\bra{\psi})$. Note that for a pure quantum state, we have that $\Tr(\Psi)=\Tr(\Psi^2)=1$ (denoting $\Psi = \ket{\psi}\bra{\psi}$) which implies $\alpha_{0^{2n}} = 1$ and $2^{-n} \sum_x \alpha_x^2 = 1$. Accordingly, we can define the characteristic distribution denoted as $p_\Psi$ and can be expressed as $p_\Psi(x) = 2^{-n} \alpha_x^2$. The samples of $p_\Psi$ can be accessed via Bell sampling~\cite{montanaro2017learning} on $\ket{\psi} \otimes \ket{\psi}^\star$. Another relevant distribution will be the Weyl distribution denoted by $q_\Psi$ which is equivalent to the convolution of $p_\Psi$ i.e., $q_\Psi(x) = \sum_{a \in \{0,1\}^{2n}} p_\Psi(a) p_\Psi(x+a)$. The Weyl distribution can be accessed via a routine called Bell difference sampling~\cite{gross2021schur} which only requires copies of $\ket{\psi}$. 

Lastly, we will work with isotropic subspaces $A \subset \FF_2^{2n}$ where all the Weyl operators corresponding to the $2n$-bit strings in $A$ commute with each other. Moreover, we say that an isotropic subspace $A$ is a Lagrangian subspace when it is of maximal size $2^n$ i.e., $|A| = 2^n$. The set of Weyl operators (up to a phase) corresponding to $A$ is then called a stabilizer group.

Finally, we denote $\calF_\calS(\ket{\psi}) = \max_{\ket{\phi}\in \Sh}|\langle\phi |\psi\rangle|^2$ as the stabilizer fidelity of $\ket{\psi}$. 
\begin{fact}[Proof of Theorem 3.3, \cite{gross2021schur}; Corollary 7.4, \cite{grewal2023improved}]
\label{fact:lower_bound_stabilizer_fidelity_pPsi_lagrangian_subspace}
For any $n$-qubit quantum state $\ket{\psi}$ and a Lagrangian subspace $T \subset \FF_2^{2n}$, we have $    \calF_\calS(\ket{\psi}) \geq \sum_{x \in T} p_\Psi(x)$.
\end{fact}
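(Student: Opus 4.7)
The plan is to leverage the standard correspondence between Lagrangian subspaces and orthonormal bases of stabilizer states. Given a Lagrangian $T \subset \FF_2^{2n}$, the Weyl operators $\{W_x : x \in T\}$ form a maximal mutually-commuting family, so once a consistent sign convention is fixed, the $2^n$ admissible sign characters $s : T \to \{\pm 1\}$ index $2^n$ Hermitian rank-$1$ projectors
\begin{equation*}
\Pi_s \;=\; \frac{1}{2^n}\sum_{x \in T} s(x)\,W_x,
\end{equation*}
each projecting onto a stabilizer state $\ket{\phi_s}$. Together $\{\ket{\phi_s}\}_s$ is an orthonormal basis of $(\CC^2)^{\otimes n}$ and the projectors resolve the identity, $\sum_s \Pi_s = \Id$; in particular $\sum_s |\la\phi_s|\psi\ra|^2 = \Tr(\Psi) = 1$.

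The next step is to evaluate $\sum_s |\la\phi_s|\psi\ra|^4$. Using the Weyl decomposition \eqref{eq:weyl_decomposition} together with $\alpha_x = \Tr(W_x \Psi)$,
\begin{equation*}
|\la\phi_s|\psi\ra|^2 \;=\; \Tr(\Pi_s \Psi) \;=\; \frac{1}{2^n}\sum_{x \in T} s(x)\,\alpha_x.
\end{equation*}
Squaring, summing over $s$, and invoking the orthogonality of characters of the $\FF_2$-vector space $T$, namely $\sum_s s(x)s(y) = 2^n \cdot \mathbf{1}[x=y]$, collapses the double sum to
\begin{equation*}
\sum_s |\la\phi_s|\psi\ra|^4 \;=\; \frac{1}{2^n}\sum_{x \in T}\alpha_x^2 \;=\; \sum_{x \in T} p_\Psi(x).
\end{equation*}

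To conclude, I would apply the elementary inequality that for any nonnegative numbers $(a_s)_s$ with $\sum_s a_s = 1$ one has $\max_s a_s \geq \sum_s a_s^2$ (since $\sum_s a_s^2 \leq (\max_s a_s)\sum_s a_s$). Taking $a_s = |\la\phi_s|\psi\ra|^2$ and recalling that every $\ket{\phi_s}$ is a stabilizer state gives
\begin{equation*}
\calF_\calS(\ket{\psi}) \;\geq\; \max_s |\la\phi_s|\psi\ra|^2 \;\geq\; \sum_s |\la\phi_s|\psi\ra|^4 \;=\; \sum_{x \in T} p_\Psi(x),
\end{equation*}
which is the stated bound.

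The only real subtlety is the sign bookkeeping in defining the $\Pi_s$: because the Weyl operators in \eqref{eq:weyl_op} satisfy $W_x W_y = \pm W_{x+y}$ with a specific cocycle sign, the admissible $s$ form a coset of the character group of $T$ rather than the character group itself. Concretely, $s(x) = c(x)\chi(x)$ for a fixed sign $c$ and a character $\chi$; since $c(x)^2 = 1$ this only introduces a global sign that cancels in the product $s(x)s(y)$, so neither the character orthogonality identity nor the final inequality is affected. I therefore view this as bookkeeping rather than a genuine obstacle.
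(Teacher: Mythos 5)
Your proof is correct. The paper does not reprove this fact but imports it from the cited references; your argument --- diagonalizing the commuting Weyl operators $\{W_x : x \in T\}$ into an orthonormal stabilizer basis $\{\ket{\phi_s}\}$ resolving the identity, computing $\sum_s |\la \phi_s|\psi\ra|^4 = \sum_{x\in T} p_\Psi(x)$ via character orthogonality, and finishing with $\max_s a_s \geq \sum_s a_s^2$ when $\sum_s a_s = 1$ --- is exactly the standard proof given in those references, and your remark that the admissible sign functions form a coset of the character group (so the cocycle cancels in $s(x)s(y)$) correctly disposes of the only subtle point.
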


\subsection{Main lemma}
We now prove Theorems~\ref{thm:improved_inverse_gowers3} and \ref{thm:improved_tolerant_tester}.
Our starting point will be the following theorem which indicates that high Gowers-$3$ norm of quantum states implies the presence of structured subsets of $$
X = \{ x \in \{0,1\}^{2n} : 2^n p_\Psi(x) \geq \gamma/4\},
$$
which was useful in \cite{ad2024tolerant} enroute to their main~result. 
\begin{theorem}[Section~4.2,\,\cite{ad2024tolerant}]
\label{thm:existence_V}
Let $\gamma >0$. If $\gowers{\ket{\psi}}{3}^8 \geq \sqrt{\gamma}$, then there exists a subset $S' \subseteq  X$ with a small doubling constant $|S' + S'| \leq \poly(1/\gamma) \cdot |S'|$ and satisfying $\poly(\gamma) 2^n \leq |S'| \leq 4/\gamma \cdot  2^n$. There is a subgroup $V$ of size $|S'| \geq |V| \geq \poly(\gamma) |S'|$ with high intersection with $S'$, i.e.,
$$
\Exp_{y \in V}\left[ 2^n p_\Psi(y) \right] \geq \poly(\gamma).
$$
\end{theorem}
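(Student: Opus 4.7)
The plan is to translate the Gowers hypothesis into an additive-energy statement on the Pauli weights $\alpha_x^2 = 2^n p_\Psi(x)$, apply Balog--Szemer\'edi--Gowers (BSG) to produce the small-doubling set $S' \subseteq X$, and finally invoke a Freiman-type theorem in characteristic two to extract the subgroup $V$. To begin, expanding the eight-fold product defining $\gowers{\ket{\psi}}{3}^{8}$ and passing to the Pauli basis via $\alpha_x = \tr(W_x\Psi)$ yields the familiar ``additive quadruple'' identity
$$\gowers{\ket{\psi}}{3}^{8} \;=\; 2^{-cn}\sum_{x_1+x_2+x_3+x_4 = 0}\alpha_{x_1}^{2}\alpha_{x_2}^{2}\alpha_{x_3}^{2}\alpha_{x_4}^{2}$$
for an explicit constant $c$; substituting $\alpha_x^{2} = 2^{n}p_\Psi(x)$, the hypothesis $\gowers{\ket{\psi}}{3}^{8} \geq \sqrt{\gamma}$ becomes a lower bound on the weighted additive energy of $p_\Psi$ on $\mathbb{F}_2^{2n}$.

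Next, I decompose $\mathbb{F}_2^{2n}$ into dyadic level sets $L_k = \{x : 2^n p_\Psi(x) \in [2^{-k},2^{-k+1})\}$ and apply a four-way pigeonhole over the quadruple in the energy sum. Levels with $2^{-k} \ll \gamma$ cannot collectively dominate (their total $p_\Psi$-mass is bounded by $1$ while each individual weight is small), so the pigeonhole selects a single level $S := L_{k^\star} \subseteq X$ responsible for a $\poly(\gamma)$-fraction of the weighted energy. Counting bounds then give $|S| \in [\poly(\gamma)\cdot 2^n,\,(4/\gamma)\cdot 2^n]$ and an unweighted additive-energy bound $E(S) \geq |S|^3/\poly(1/\gamma)$. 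Applying BSG to $S$ with energy ratio $K = \poly(1/\gamma)$ produces $S' \subseteq S \subseteq X$ with $|S'| \geq \poly(\gamma)|S|$ and $|S' + S'| \leq \poly(1/\gamma)|S'|$, establishing the first half of the theorem.

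For the subgroup $V$, I invoke a polynomial-type Freiman theorem in $\mathbb{F}_2^{2n}$ --- either Sanders' (quasi-)polynomial Bogolyubov bound or the polynomial Freiman--Ruzsa theorem of Gowers--Green--Manners--Tao --- applied to $S'$. This produces an $\mathbb{F}_2$-subgroup $V$ with $|V| \in [\poly(\gamma)|S'|,\,|S'|]$ together with an element $t_\star$ for which $|(V+t_\star) \cap S'| \geq \poly(\gamma)|V|$; the polynomial loss is controlled because the iterated sumset containing $V$ has size $\leq \poly(1/\gamma)|S'|$ by Pl\"unnecke--Ruzsa. Combining the intersection bound with $S' \subseteq X$, so that $p_\Psi(y) \geq \gamma/(4\cdot 2^n)$ for each $y \in S'$, yields the target
$$\Exp_{y\in V}[2^n p_\Psi(y)] \;\geq\; \frac{|V\cap X|}{|V|}\cdot \frac{\gamma}{4} \;\geq\; \poly(\gamma).$$

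The main obstacle is the coset-to-subgroup passage at the end: the Freiman/Bogolyubov output is natively a coset $V + t_\star$ meeting $S'$ heavily, but converting this into a statement about $V$ itself is delicate because $p_\Psi$ is not translation-invariant, and naively shifting $S'$ by $t_\star$ would destroy the containment $S' \subseteq X$. The cleanest workaround exploits the multiplicative structure of the Pauli group --- if $\alpha_x,\alpha_y$ are both large then $W_x\ket{\psi}\approx\ket{\psi}\approx W_y\ket{\psi}$, forcing $\alpha_{x+y}$ to be large as well (up to a commutation phase), so that the high-weight set of $p_\Psi$ is already approximately closed under $\mathbb{F}_2$-addition. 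Alternatively, one can reroute the Freiman step through the Weyl distribution $q_\Psi = p_\Psi * p_\Psi$, whose support is translation-symmetrized by construction. Keeping all quantitative losses polynomial in $\gamma$ through this bridging step is the technical core of the argument.
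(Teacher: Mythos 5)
First, a point of comparison: the paper does not actually prove Theorem~\ref{thm:existence_V} --- it imports it verbatim from Section~4.2 of \cite{ad2024tolerant} --- so your proposal can only be measured against the strategy of that cited argument. Your skeleton (convert the Gowers hypothesis into a statement about additive structure of the weights $2^n p_\Psi(x)$, pigeonhole onto a level set inside $X$, run Balog--Szemer\'edi--Gowers to get $S'$ with small doubling, then a polynomial Freiman--Ruzsa step to extract $V$) is indeed the route taken there. One inaccuracy worth fixing: the identity you write for $\gowers{\ket{\psi}}{3}^8$ is not a fourfold ``additive quadruple'' sum; the operative quantity is $\Exp_{x\sim q_\Psi}[2^n p_\Psi(x)] = 2^{-2n}\sum_{x_1+x_2+x_3=0}\alpha_{x_1}^2\alpha_{x_2}^2\alpha_{x_3}^2$, a \emph{threefold} convolution, and the $\sqrt{\gamma}$ in the hypothesis reflects a Cauchy--Schwarz passage from $\gowers{\ket{\psi}}{3}^8$ to this quantity rather than an exact identity. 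This is not fatal (a large threefold count implies large additive energy by Cauchy--Schwarz), but it matters because the threefold sum is origin-centered rather than translation-invariant --- which is exactly the leverage needed below.

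The genuine gap is the coset-to-subgroup passage that you yourself flag as ``the technical core'': neither workaround you offer closes it. Workaround (a) --- that the level set $\{x : \alpha_x^2 \ge \gamma/4\}$ is approximately closed under addition because $W_x\ket{\psi}\approx\ket{\psi}$ --- is false in the regime that matters: $W_x\ket{\psi}$ is close to $\pm\ket{\psi}$ only when $\alpha_x^2$ is close to $1$, not when it merely exceeds $\gamma/4$. A one-qubit counterexample is $\ket{\psi}=\cos(\pi/8)\ket{0}+\sin(\pi/8)\ket{1}$, for which $\la \psi|X|\psi\ra^2=\la \psi|Z|\psi\ra^2=1/2$ while $\la \psi|Y|\psi\ra=0$, so the weight-$1/2$ level set is not even approximately closed under addition. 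Workaround (b) --- rerouting through $q_\Psi$ --- yields $\sum_{y\in V}q_\Psi(y)=\sum_{c}\bigl(\sum_{a\in c}p_\Psi(a)\bigr)^2$ (sum over cosets $c$ of $V$), which is large whenever \emph{some} coset of $V$ is heavy and therefore still cannot distinguish $V$ from its translates; a distribution supported on a single nontrivial coset has full $q$-mass on $V$ and zero $p$-mass on $V$. The actual argument has to carry the origin-anchoring of $\sum_{x_1+x_2+x_3=0}\prod_i\alpha_{x_i}^2$ (triples summing to $0$, not translation-invariant additive quadruples) through the BSG and Freiman steps so that the subgroup, not merely a coset, inherits the mass. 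As written, your final display assumes $|V\cap X|\ge \poly(\gamma)\,|V|$, which is precisely what has not been established.
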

We now prove our main theorem in this note, which bounds the stabilizer covering of $V$.
\begin{theorem}\label{thm:stabilizer_covering_group}
Let $\ket{\psi}$ be an arbitrary pure quantum state and $V$ be a subgroup of $\{0,1\}^{2n}$ such that it has high intersection with the set $S' \subseteq X$, i.e., 
$
\Exp_{y \in V}\left[ 2^n p_\Psi(y) \right] \geq \poly(\gamma).
$
Then, $V$ can be covered by a union of $O(\poly(1/\gamma))$ many stabilizer subgroups $G_j\subseteq \01^{2n}$.
\end{theorem}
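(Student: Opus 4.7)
The plan is to control the symplectic rank of $V$ modulo its radical and then cover $V$ by pulling back Lagrangians from the quotient. Write $\omega$ for the symplectic form on $\FF_2^{2n}$ determined by $W_x W_y = (-1)^{\omega(x,y)} W_y W_x$, let $R = V \cap V^\perp$ be the radical of $V$ under $\omega$, and let $\pi \colon V \to V/R$ be the quotient map. Then $R$ is automatically an isotropic subspace of $\FF_2^{2n}$, the induced form on $V/R$ is nondegenerate of some even dimension $2k$, and $|V| = |R|\cdot 2^{2k}$. Any Lagrangian $L \le V/R$ pulls back to an isotropic subgroup $\pi^{-1}(L) \subseteq V$ which in turn extends to a Lagrangian subspace of $\FF_2^{2n}$, i.e.~a stabilizer subgroup. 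So the real task is to cover $V/R$ by $\poly(1/\gamma)$ Lagrangians in $\FF_2^{2k}$, which is trivially possible as soon as $2^k \leq \poly(1/\gamma)$ (take, for each nonzero $v\in V/R$, any Lagrangian through $v$, giving at most $2^{2k}=\poly(1/\gamma)$ subspaces).

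The core estimate I would establish is an operator-norm bound on the Hermitian operator $B_V := \sum_{y \in V} W_y \otimes W_y$. Using the Weyl composition rule $W_y W_{y'} = c(y,y') W_{y+y'}$ together with the identity $c(y,y')^2 = (-1)^{\omega(y,y')}$ (which comes from $(W_y W_{y'})^2 = W_y W_{y'} W_y W_{y'} = (-1)^{\omega(y,y')}I$), a direct expansion gives
\begin{equation*}
B_V^2 \;=\; \sum_{y,y' \in V} (-1)^{\omega(y,y')}\, W_{y+y'} \otimes W_{y+y'} \;=\; |V| \sum_{z \in R} W_z \otimes W_z,
\end{equation*}
where, after reindexing $(y,y') \mapsto (y,\,z := y+y')$, the inner character sum $\sum_{y\in V}(-1)^{\omega(y,z)}$ vanishes unless $z \in V^\perp$, i.e.~$z \in V \cap V^\perp = R$. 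Because $R$ is isotropic, $\{W_z \otimes W_z\}_{z \in R}$ is a commuting family of Hermitian unitaries, so $\sum_{z \in R} W_z \otimes W_z = |R|\cdot \Pi_R$ for the projector $\Pi_R$ onto their common $+1$-eigenspace. Combining yields $B_V^2 = |V|\cdot|R|\cdot \Pi_R$, and since $B_V$ is Hermitian this forces $\|B_V\|_{\mathrm{op}} = \sqrt{|V|\cdot|R|} = |R| \cdot 2^k$.

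The bound on $k$ then falls out immediately: evaluating $B_V$ on $|\psi\rangle\otimes|\psi\rangle$ gives
\begin{equation*}
\sum_{y \in V} \alpha_y^2 \;=\; \langle\psi|\otimes\langle\psi|\, B_V \,|\psi\rangle\otimes|\psi\rangle \;\leq\; \|B_V\|_{\mathrm{op}} \;=\; |R|\cdot 2^k,
\end{equation*}
whereas the hypothesis reads $\sum_{y \in V} \alpha_y^2 \geq \poly(\gamma)\cdot |V| = \poly(\gamma)\cdot |R| \cdot 2^{2k}$. Comparing forces $2^k \leq \poly(1/\gamma)$, which combined with the covering scheme described in the first paragraph finishes the proof. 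The main technical obstacle I anticipate is the clean sign book-keeping in the $B_V^2$ computation---in particular verifying $c(y,y')^2 = (-1)^{\omega(y,y')}$ under the Weyl convention \eqref{eq:weyl_op} and confirming that the inner character sum truly restricts to the radical $R$; once those identities are nailed down, the remainder of the argument is routine linear algebra and elementary symplectic geometry.
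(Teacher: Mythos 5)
Your proof is correct and reaches the same quantitative conclusion ($2^{2k}=\poly(1/\gamma)$ Lagrangian subspaces suffice), but the key estimate is obtained by a genuinely different route. The paper first puts $V$ into Clifford normal form $UVU^\dagger = \calP^k\times\la Z_{k+1},\ldots,Z_{k+m}\ra$ (Fact~\ref{fact:clifford_action_on_group}) and then bounds $\sum_{y\in V}2^np_\Psi(y)\le 2^{k+m}$ via the unitary $1$-design property of $\calP^k$ and a partial-trace argument (Fact~\ref{eq:ub_sum1}); you instead bound the same sum by $\|B_V\|_{\mathrm{op}}$ with $B_V=\sum_{y\in V}W_y\otimes W_y$, computing $B_V^2=|V|\,|R|\,\Pi_R$ directly from the Weyl composition rule and a character sum over $V$, which gives $\|B_V\|_{\mathrm{op}}=\sqrt{|V|\,|R|}=|R|\cdot 2^k$ --- numerically the identical bound, since $|R|=2^m$ in the paper's coordinates. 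Your version is coordinate-free: it needs neither the Clifford normal form nor the $1$-design identity, only the elementary facts $c(y,y')^2=(-1)^{\omega(y,y')}$ and that the character sum restricts the support of $B_V^2$ to the radical $R=V\cap V^\perp$; likewise your covering step (pull back a Lagrangian of $V/R$ through each nonzero class, then extend to a Lagrangian of $\FF_2^{2n}$) is the coordinate-free counterpart of the paper's $\tilde S_a=\la \tau_a\otimes I_{n-k}\ra\times\la Z_{k+1},\ldots,Z_{k+m}\ra$ construction conjugated by $U^\dagger$, and yields the same count $2^{2k}=\poly(1/\gamma)$. One small point to tighten: the identity $\sum_{z\in R}W_z\otimes W_z=|R|\,\Pi_R$ needs more than the $W_z\otimes W_z$ being a commuting family of Hermitian unitaries --- it needs them to form a group with trivial signs, i.e. $(W_z\otimes W_z)(W_{z'}\otimes W_{z'})=W_{z+z'}\otimes W_{z+z'}$, which is exactly what isotropy of $R$ supplies via $c(z,z')^2=(-1)^{\omega(z,z')}=1$; once that is spelled out the argument is complete.
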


To do this, we need the following facts. In a slight abuse of notation, we will also use $V$ to denote the Weyl operators that correspond to the $2n$-bit strings in $V$. We use $X_i$ (or $Z_i$) to denote the Weyl operator with the single-qubit Pauli $X$ (or $Z$) acting on the $i$th qubit and trivially elsewhere. Moreover, we denote $\calP^m_Z=\{I,Z\}^{\otimes m}$. We write $I_\ell$ for the $\ell$-qubit identity operator.
\begin{fact}\label{fact:clifford_action_on_group}(\cite{fattal2004entanglement}) There exists $m+k \leq n$ and an $n$-qubit Clifford $U$~such~that
\begin{equation}
    U V U^\dagger = \la Z_1, X_1, \ldots, Z_k, X_k, Z_{k+1}, Z_{k+2}, \ldots, Z_{k+m} \ra.
\end{equation}
\end{fact}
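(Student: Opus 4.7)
The plan is to apply the Clifford normal form from Fact~\ref{fact:clifford_action_on_group} to reduce $V$ to a canonical product of a ``symplectic'' block on $k$ qubits and an ``isotropic'' block on $m$ qubits, use the hypothesis $\Exp_{y \in V}[2^n p_\Psi(y)] \geq \poly(\gamma)$ to bound $k = O(\log(1/\gamma))$, and then cover the small $k$-qubit symplectic part by $\poly(1/\gamma)$ Lagrangian subspaces of $\FF_2^{2k}$. Since Clifford conjugation acts as a symplectic change of basis on $\FF_2^{2n}$, preserves $p_\Psi$ up to relabeling, and maps stabilizer subgroups to stabilizer subgroups, both the hypothesis and the conclusion are Clifford invariant. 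I therefore assume without loss of generality that $V = \la Z_1, X_1, \ldots, Z_k, X_k, Z_{k+1}, \ldots, Z_{k+m}\ra$, so $|V| = 2^{2k+m}$, and I split the $n$ qubits into blocks $A = \{1, \ldots, k\}$, $B = \{k+1, \ldots, k+m\}$, $C = \{k+m+1, \ldots, n\}$. Every $y \in V$ corresponds to a Weyl operator $W_y = P_A \otimes P_B \otimes I_C$ where $P_A$ ranges over $\calP^k$ and $P_B$ over $\{I, Z\}^{\otimes m}$.

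The main step is bounding $k$. For each fixed $P_B \in \{I,Z\}^{\otimes m}$, define the Hermitian operator $N_{P_B} := \Tr_{BC}\bigl[(I_A \otimes P_B \otimes I_C)\Psi\bigr]$ on the $k$-qubit block $A$. Since $\|P_B\|_\infty = 1$, $\|\Psi\|_1 = 1$, and the trace norm is monotone under partial trace, the duality $\|N\|_1 = \sup_{\|R\|_\infty \leq 1} |\Tr(R^\dagger N)|$ yields $\|N_{P_B}\|_1 \leq 1$; a similar variational argument gives $\|N_{P_B}\|_\infty \leq 1$. Hence $\Tr(N_{P_B}^2) \leq \|N_{P_B}\|_1 \|N_{P_B}\|_\infty \leq 1$. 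Using Pauli orthogonality $\Tr(PP') = 2^k \delta_{P,P'}$ on $A$ (Parseval),
\[
\sum_{P_A \in \calP^k} \alpha_{(P_A, P_B, I_C)}^2 \;=\; \sum_{P_A} \Tr(P_A N_{P_B})^2 \;=\; 2^k \Tr(N_{P_B}^2) \;\leq\; 2^k.
\]
Summing over the $2^m$ choices of $P_B$ gives $\sum_{y \in V} \alpha_y^2 \leq 2^{k+m}$, so $\sum_{y \in V} p_\Psi(y) \leq 2^{k+m-n}$. On the other hand, the hypothesis yields $\sum_{y \in V} p_\Psi(y) = (|V|/2^n)\Exp_{y \in V}[2^n p_\Psi(y)] \geq 2^{2k+m-n}\poly(\gamma)$. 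Combining these gives $2^k \leq 1/\poly(\gamma)$, i.e., $2^k = \poly(1/\gamma)$.

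For the covering step, $\FF_2^{2k}$ can be covered by $\poly(1/\gamma)$ Lagrangian subspaces: a symplectic spread yields $2^k + 1$ Lagrangians whose union is $\FF_2^{2k}$, and even the crude cover (extend every singleton to a Lagrangian) uses at most $2^{2k} = \poly(1/\gamma)$. Fix such a family $L_1, \ldots, L_r$ and set
\[
G_j \;=\; L_j \;\oplus\; \la Z_{k+1}, \ldots, Z_{k+m}\ra \;\oplus\; \{0\}^{2(n-k-m)} \;\subseteq\; \FF_2^{2n}.
\]
Each $G_j$ is isotropic: $L_j$ is Lagrangian on $A$, the $B$-block is purely-$Z$, the $C$-block is trivial, and different blocks commute as they act on disjoint tensor factors; hence each $G_j$ is a stabilizer subgroup. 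Any element of $V$ has its $A$-component in some $L_j$ and its $B$-component in $\la Z_{k+1}, \ldots, Z_{k+m}\ra$, so $V \subseteq \bigcup_{j=1}^r G_j$. Pulling back through the symplectic map induced by $U^\dagger$ yields a covering of the original $V$ by $r = \poly(1/\gamma)$ stabilizer subgroups. The main obstacle is the upper bound $\sum_{y \in V} p_\Psi(y) \leq 2^{k+m-n}$: because $V$ is not isotropic, the standard Lagrangian sum bound of Fact~\ref{fact:lower_bound_stabilizer_fidelity_pPsi_lagrangian_subspace} does not apply, but the trace and operator norms of the partial-trace operators $N_{P_B}$ are both at most $1$, and this is exactly what constrains the symplectic dimension of $V$ once its $p_\Psi$-mass is large.
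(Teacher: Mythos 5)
You have not proven the statement in question. Fact~\ref{fact:clifford_action_on_group} is the Clifford normal-form result for an arbitrary subgroup $V\subseteq \FF_2^{2n}$ of Weyl labels: it asserts, with no hypothesis on $p_\Psi$ whatsoever, that some Clifford $U$ conjugates $V$ onto $\la Z_1, X_1,\ldots,Z_k,X_k,Z_{k+1},\ldots,Z_{k+m}\ra$. Your argument opens by \emph{invoking} this very fact (``apply the Clifford normal form from Fact~\ref{fact:clifford_action_on_group}'') and then bounds $k$ and builds a stabilizer cover --- that is, you have written a proof of Theorem~\ref{thm:stabilizer_covering_group} together with Fact~\ref{eq:ub_sum1} and Claim~\ref{claim:small_k}, not of the Fact itself. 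Read as a proof of the Fact it is circular, and the hypothesis $\Exp_{y\in V}[2^n p_\Psi(y)]\geq \poly(\gamma)$ that you lean on is not part of the Fact's statement. (The paper does not prove the Fact either; it cites \cite{fattal2004entanglement}.)

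A direct proof is the symplectic Gram--Schmidt procedure: equip $\FF_2^{2n}$ with the symplectic form $\omega(x,y)$ recording whether $W_x$ and $W_y$ commute, and restrict it to $V$. The radical $V_0=\{v\in V: \omega(v,w)=0 \text{ for all } w\in V\}$ has some dimension $m$, the quotient $V/V_0$ carries a nondegenerate form and hence has even dimension $2k$, so $V$ admits a basis of $k$ hyperbolic pairs plus $m$ pairwise-isotropic vectors; extending to a symplectic basis of $\FF_2^{2n}$ forces $k+m\le n$, and since every symplectic transformation of $\FF_2^{2n}$ is induced by conjugation by an $n$-qubit Clifford, the change of basis is realized by some $U$ mapping the generators to $Z_1,X_1,\ldots,Z_k,X_k,Z_{k+1},\ldots,Z_{k+m}$. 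As a secondary remark, the material you did prove closely tracks the paper's own treatment of Theorem~\ref{thm:stabilizer_covering_group}: your Parseval bound $\sum_{P_A}\Tr(P_A N_{P_B})^2\le 2^k$ is Fact~\ref{eq:ub_sum1} via H\"older duality rather than the 1-design identity, and your symplectic-spread cover by $2^k+1$ Lagrangians mildly sharpens the paper's $4^k$ singleton extensions, both being $\poly(1/\gamma)$.
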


Define the set $T := UVU^\dagger$. Using Fact~\ref{fact:clifford_action_on_group}, we have $T = UVU^\dagger = \calP^k \times \la Z_{k+1}, Z_{k+2}, \ldots, Z_{k+m}\rangle$.~Then
\begin{align*}
\frac{1}{|V|} \sum_{W_x \in \calP^k \times \la Z_{k+1},\ldots,Z_{k+m}\ra} 2^n p_{\tilde{\Psi}}(x) \geq \poly(\gamma)
\end{align*}
where the state $\tilde{\Psi}$ is the state obtained after application of the Clifford unitary $U$ i.e., $\tilde{\Psi} = U \Psi U^\dagger$. 


\begin{fact}
\label{eq:ub_sum1}
Let $\ell=n-k-m$.
We have the following
$$
\sum_{W_y \in \calP^k}
\sum_{W_z \in \calP^m_Z} \Tr\left( (W_y \otimes W_z \otimes I_\ell) \tilde{\Psi}\right)^2  \leq 2^{m+k} \,.
$$
\end{fact}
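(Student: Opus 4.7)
The plan is to reduce the claim to the standard fact that the sum of squared Weyl/Pauli coefficients of a density matrix equals its purity (times a dimension factor), which is bounded by the dimension.

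First I would introduce the reduced state $\rho := \Tr_{\ell}(\tilde{\Psi})$ on the first $k+m$ qubits, where the partial trace is taken over the last $\ell = n - k - m$ qubits. By the cyclicity of the trace and the definition of the partial trace,
\begin{equation*}
\Tr\!\left((W_y \otimes W_z \otimes I_\ell)\,\tilde{\Psi}\right) \;=\; \Tr\!\left((W_y \otimes W_z)\,\rho\right)
\end{equation*}
for every $W_y\in\calP^k$ and $W_z\in\calP^m_Z$. In particular the quantity inside the square is real, since every Weyl operator is Hermitian.

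Next I would expand $\rho$ in the Weyl basis on $k+m$ qubits, $\rho = 2^{-(k+m)} \sum_{P} \gamma_P\, P$, with coefficients $\gamma_P = \Tr(P\rho)$, where the sum ranges over all $4^{k+m}$ Weyl operators on $k+m$ qubits. Using orthogonality $\Tr(P P') = 2^{k+m}\delta_{P,P'}$, the purity of $\rho$ becomes
\begin{equation*}
\Tr(\rho^2) \;=\; \frac{1}{2^{k+m}}\sum_{P}\gamma_P^{\,2}.
\end{equation*}
Since $\rho$ is a density matrix, $\Tr(\rho^2)\leq 1$, and hence $\sum_P \gamma_P^{\,2}\leq 2^{k+m}$.

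Finally, the quantity to be bounded is exactly $\sum_{W_y\in\calP^k}\sum_{W_z\in\calP^m_Z}\gamma_{W_y\otimes W_z}^{\,2}$, i.e.\ a sub-sum of $\sum_P\gamma_P^{\,2}$ in which $W_z$ is restricted from the full Pauli group on $m$ qubits to only the $Z$-type operators $\calP^m_Z$. Since each term is non-negative, dropping the missing terms can only decrease the total, giving the claimed bound $2^{m+k}$. There is essentially no obstacle here: the only thing to keep straight is that the restriction to $\calP^m_Z$ makes the inequality non-tight but never violates it, and that passing from $\tilde{\Psi}$ to its reduced state is exactly what licenses comparing the squared expectation values to the purity of a density matrix on $k+m$ qubits.
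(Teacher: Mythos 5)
Your proof is correct, but it takes a different route from the paper's. You trace out the last $\ell$ qubits, write the left-hand side as a sub-sum of the squared Weyl coefficients of the reduced state $\rho=\Tr_\ell(\tilde{\Psi})$ on the first $k+m$ qubits, and invoke the purity identity $\sum_P \Tr(P\rho)^2 = 2^{k+m}\Tr(\rho^2)\le 2^{k+m}$ together with non-negativity of the dropped terms. The paper instead traces out the \emph{first} $k$ qubits: it uses the Pauli $1$-design identity $\sum_{W_y\in\calP^k} W_y O W_y = 2^k\Tr(O)\, I_k$ to collapse the sum over $W_y$ into $2^k\Tr\bigl(\rho\,(W_z\otimes I_\ell)\,\rho\,(W_z\otimes I_\ell)\bigr)$ with $\rho=\Tr_{1,\ldots,k}\tilde{\Psi}$, bounds this by $2^k\Tr(\rho)=2^k$ via $\|(W_z\otimes I_\ell)\rho(W_z\otimes I_\ell)\|=\|\rho\|\le 1$, and then sums over the $2^m$ elements of $\calP^m_Z$. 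Your argument is arguably the more standard and symmetric one: it handles both sums in a single step, needs only Weyl orthogonality and $\Tr(\rho^2)\le 1$, and makes transparent exactly where the slack comes from (restricting $W_z$ to $Z$-type operators discards non-negative terms). The paper's version yields the slightly finer per-$W_z$ bound $\sum_{W_y}\Tr(\cdot)^2\le 2^k$, which is extra information not needed for the final inequality. Both give the same bound $2^{m+k}$, and your reliance on the Hermiticity of Weyl operators (so that the coefficients are real and their squares non-negative) is correctly noted and does hold for the convention in Eq.~\eqref{eq:weyl_op}.
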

\begin{proof}
By definition, $\tilde{\Psi}$ is a projector onto a  normalized pure state. Thus
\[
 \Tr\left( (W_y \otimes W_z \otimes I_\ell) \tilde{\Psi}\right)^2  = \Tr\left( \tilde{\Psi} (W_y \otimes W_z \otimes I_\ell) \tilde{\Psi} (W_y \otimes W_z\otimes I_\ell)\right)
 \]
Since the Pauli group  is a unitary $1$-design, for any $k$-qubit operator $O$ one has $\sum_{W_y\in \calP^k} W_y O W_y = 2^k \Tr(O) \cdot I_k$.
As a consequence, 
\be
\sum_{W_y \in \calP^k} (W_y\otimes I_{m+\ell}) \tilde{\Psi}  (W_y\otimes I_{m+\ell}) =2^k I_k \otimes \rho
\ee
where  $\rho=\Tr_{1,\ldots,k} \tilde{\Psi}$ is an $(m+\ell)$-qubit density matrix obtained from $\tilde{\Psi}$ by tracing out the first $k$ qubits.
Accordingly, for any $W_z\in\calP^m_Z$ one has
\[
\sum_{W_y \in \calP^k} \Tr\left( (W_y \otimes W_z\otimes I_\ell) \tilde{\Psi}\right)^2 = 2^k \Tr\left( \rho (W_z\otimes I_\ell) \rho (W_z\otimes I_\ell)\right)\le
2^k\Tr(\rho)=
2^k.
\]
To get the last inequality we noted that  $\|(W_z \otimes I_\ell) \rho (W_z\otimes I_\ell)\|=\|\rho\|\le 1$.
We arrive at
\[
\sum_{W_y \in \calP^k}\sum_{W_z \in\calP^m_Z} \Tr\left( (W_y \otimes W_z\otimes I_\ell) \tilde{\Psi}\right)^2
= 2^k \sum_{W_z\in \calP^m_Z}  \Tr\left( \rho (W_z\otimes I_\ell) \rho (W_z\otimes I_\ell) \right) \le 2^{m+k}.
\]
\end{proof}

\begin{claim}\label{claim:small_k}
$
k \leq O(\log (1/\gamma)).
$
\end{claim}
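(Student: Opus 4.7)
The plan is to combine the high-intersection lower bound on $\sum_{W_x \in T} \Tr(W_x \tilde{\Psi})^2$ with the upper bound supplied by Fact~\ref{eq:ub_sum1}, and read off the constraint on $k$. The key observation is that $|V| = |T| = 4^k \cdot 2^m$ (since $\calP^k$ contains $4^k$ Weyl operators and $\langle Z_{k+1},\ldots,Z_{k+m}\rangle$ contains $2^m$), so the hypothesis $\Exp_{y \in V}[2^n p_{\tilde{\Psi}}(y)] \geq \poly(\gamma)$ converts into an absolute lower bound that scales with $4^k \cdot 2^m$, while Fact~\ref{eq:ub_sum1} bounds the same sum from above by only $2^{k+m}$.

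Concretely, I would first use the identity $2^n p_{\tilde{\Psi}}(x) = \Tr(W_x \tilde{\Psi})^2$ (from the Weyl decomposition in \eqref{eq:weyl_decomposition}) to rewrite the intersection hypothesis as
\[
\sum_{W_x \in T} \Tr(W_x \tilde{\Psi})^2 \;=\; |V| \cdot \Exp_{y \in V}\bigl[2^n p_{\tilde{\Psi}}(y)\bigr] \;\geq\; \poly(\gamma)\cdot 4^k \cdot 2^m.
\]
Note this uses the fact that $p_{\tilde{\Psi}}$ on $T$ equals $p_\Psi$ on $V$ up to relabeling by the Clifford $U$, so the expectation is preserved.

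Then I would invoke Fact~\ref{eq:ub_sum1} directly, since $T$ is exactly the index set $\calP^k \times \calP^m_Z$ (tensored with $I_\ell$), yielding
\[
\sum_{W_x \in T} \Tr(W_x \tilde{\Psi})^2 \;\leq\; 2^{k+m}.
\]
Combining the two inequalities gives $\poly(\gamma)\cdot 4^k \cdot 2^m \leq 2^{k+m}$, i.e.\ $2^k \leq 1/\poly(\gamma)$, which rearranges to $k \leq O(\log(1/\gamma))$.

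There is no real obstacle here; the substantive content is packed into Fact~\ref{eq:ub_sum1} (the Pauli $1$-design / partial-trace bound) and the structural decomposition of $T$ from Fact~\ref{fact:clifford_action_on_group}. The only subtlety worth double-checking is the counting $|T| = 4^k \cdot 2^m$ versus the upper bound $2^{k+m}$: the former grows like $4^k$ in the anisotropic direction while the latter grows only like $2^k$, and it is precisely this gap that forces $k$ to be small. The claim on $m$ (presumably handled separately) will need a different argument, since Fact~\ref{eq:ub_sum1} is tight in the $m$ direction and does not constrain it.
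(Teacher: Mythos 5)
Your proof is correct and is essentially identical to the paper's: both convert the expectation hypothesis into the absolute lower bound $\poly(\gamma)\cdot|V| = \poly(\gamma)\cdot 2^{2k+m}$, compare against the upper bound $2^{k+m}$ from Fact~\ref{eq:ub_sum1}, and conclude $2^k \leq 1/\poly(\gamma)$. (Your closing aside is also on point: the paper never needs, and never proves, a bound on $m$.)
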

\begin{proof}
We noted earlier that $\Exp_{y \in V} [2^n p_\Psi(y)] \geq \poly(\gamma)$.
Since 
$\Exp_{y \in V} [2^n p_\Psi(y)] = \Exp_{y \in UVU^\dag} [2^n p_{\tilde{\Psi}}(y)]$, one gets
\begin{align*}
    \poly(\gamma) |V| &\leq \sum_{W_x \in \calP^k \times \la Z_{k+1},\ldots,Z_{k+m}\ra} 2^n p_{\tilde{\Psi}}(x) = \sum_{W_y \in \calP^k, W_z \in \calP_Z^m} \Tr\left( (W_y \otimes W_z \otimes I_\ell) \tilde{\Psi}\right)^2 
    \leq 2^{k+m}
\end{align*}
where the second inequality uses Fact~\ref{eq:ub_sum1}. Since $|V|=2^{2k+m}$, this implies that $k=O(\log (1/\gamma))$.
\end{proof}

\begin{proof}[Proof of Theorem~\ref{thm:stabilizer_covering_group}.] We have previously noted that $T \equiv \calP^k \times \la Z_{k+1},\ldots,Z_{k+m} \ra$. Let us write the $4^k$ Paulis in $\calP^k = \{\tau_1,\tau_2,\ldots,\tau_{4^k}\}$. 
We choose $\tilde{S}_a = \la \tau_a \otimes I_{n-k} \ra \times \la Z_{k+1}, \ldots Z_{k+m} \ra \subseteq \calP^n$ for each $a \in [4^k]$. Moreover, we can extend $\tilde{S}_a$ to a stabilizer group $S_a$ on $n$-qubits. We then have 
\begin{equation}
    V \subseteq \bigcup \limits_{a=1}^{M} U^\dagger S_a U,
\end{equation}
where $M = 4^k$. Using Claim~\ref{claim:small_k}, we have $k \leq O(\log(1/\gamma))$ implying have that  $M=\poly(1/\gamma)$.~Moreover, the action of a Clifford $U$ on a stabilizer group $S_a$ is to take it to another stabilizer~group.
\end{proof}

\subsection{Putting everything together.} 
\paragraph{Inverse theorem.} Using Theorem~\ref{thm:stabilizer_covering_group}, we can cover $V$ with a disjoint union of $M = \poly(1/\gamma)$ many Lagrangian subspaces $\{S_a\}_{a \in [M]}$ (also called stabilizer groups). Thus, there exists a stabilizer group at $i^* \in [M]$ for which
$$
\frac{1}{|V|}\sum_{y\in V\cap S_{i^*}} 2^n p_\psi(y)\geq \poly(\gamma).
$$
We can now conclude by using Fact~\ref{fact:lower_bound_stabilizer_fidelity_pPsi_lagrangian_subspace} and considering the stabilizer group $S_{i^*}$ to show
\begin{align*}
    \calF_\calS(\ket{\psi}) \geq \sum_{y \in S_{i^\star}} p_\Psi(y) \geq \sum_{y \in S_{i^*} \cap V} p_\psi(y) \geq \poly(\gamma).
\end{align*}
This completes the proof of Theorem~\ref{thm:improved_inverse_gowers3}. The following corollary is immediately true then as well.
\begin{corollary}
\label{thm:inversegowersstates_weyl_expectations}
Let $\gamma\in [0,1]$. There exists a constant $C'>1$ such that the following is true. If $\ket{\psi}$ is an $n$-qubit quantum state such that  $\Exp \limits_{x \sim q_\Psi}\left[|\la \psi | W_x | \psi \ra|^2 \right] \geq \gamma$, then there exists an $n$-qubit stabilizer state $\ket{\phi}$ such that 
$
|\la \psi | \phi \ra|^2 \geq \gamma^{C'}.
$
\end{corollary}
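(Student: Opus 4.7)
The plan is to reuse the argument from the proof of Theorem~\ref{thm:improved_inverse_gowers3} after rephrasing the hypothesis. Recalling from the Notation section that $\alpha_x = \la\psi|W_x|\psi\ra$ and $p_\Psi(x) = 2^{-n}\alpha_x^2$, we have $|\la\psi|W_x|\psi\ra|^2 = 2^n p_\Psi(x)$, so the hypothesis of the corollary is equivalent to
\[
\Exp_{x \sim q_\Psi}\left[2^n p_\Psi(x)\right] \geq \gamma.
\]
This is precisely the alternative hypothesis flagged in the remark immediately following Theorem~\ref{thm:improved_inverse_gowers3}, for which the paper asserts that the inverse theorem continues to hold with only a change in the polynomial factor.

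Next, I would invoke the variant of Theorem~\ref{thm:existence_V} corresponding to this alternative hypothesis, which produces a subgroup $V \subseteq \{0,1\}^{2n}$ satisfying $\Exp_{y \in V}\left[2^n p_\Psi(y)\right] \geq \poly(\gamma)$. From this point onward, the argument is identical to the ``Putting everything together'' paragraph: Theorem~\ref{thm:stabilizer_covering_group} covers $V$ by $M = \poly(1/\gamma)$ Lagrangian subspaces $\{S_a\}_{a \in [M]}$; a pigeonhole argument selects an index $i^* \in [M]$ for which $\sum_{y \in V \cap S_{i^*}} p_\Psi(y) \geq \poly(\gamma)$; and Fact~\ref{fact:lower_bound_stabilizer_fidelity_pPsi_lagrangian_subspace} yields
\[
\calF_\calS(\ket{\psi}) \;\geq\; \sum_{y \in S_{i^*}} p_\Psi(y) \;\geq\; \sum_{y \in V \cap S_{i^*}} p_\Psi(y) \;\geq\; \poly(\gamma),
\]
which is of the claimed form $\gamma^{C'}$ for some constant $C' > 1$.

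The only non-routine step is justifying the alternative form of Theorem~\ref{thm:existence_V}. This amounts to tracing through Section~4.2 of~\cite{ad2024tolerant} and observing that the combinatorial argument producing the small-doubling set $S'$ and the subgroup $V$ ultimately only consumes a lower bound on the density of ``3-term additive patterns'' in $p_\Psi$, and the quantity $\Exp_{x \sim q_\Psi}[2^n p_\Psi(x)]$ is proportional to $\sum_{x+y+z=0} p_\Psi(x) p_\Psi(y) p_\Psi(z)$ after expanding the convolution defining $q_\Psi$. Hence no new ideas are needed beyond bookkeeping of constants, and the corollary follows.
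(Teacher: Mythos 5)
Your proposal is correct and matches the paper's (implicit) proof: the paper derives the corollary as immediate from Theorem~\ref{thm:improved_inverse_gowers3} precisely because $|\la \psi | W_x | \psi \ra|^2 = 2^n p_\Psi(x)$ and the Gowers-norm hypothesis is only ever used to obtain the bound $\Exp_{x \sim q_\Psi}[2^n p_\Psi(x)] \geq \poly(\gamma)$ feeding into Theorem~\ref{thm:existence_V}. Your identification of the 3-term additive-pattern count as the quantity actually consumed by the argument in Section~4.2 of~\cite{ad2024tolerant} is exactly the right justification.
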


\paragraph{Tolerant tester.} Our Theorem~\ref{thm:improved_tolerant_tester} follows similar to \cite[Theorem~1.3]{ad2024tolerant}: the  algorithm simply takes $O(1/\delta^2)$ copies of $\ket{\psi}$ and estimates $\Exp \limits_{x \sim q_\Psi}\left[|\la \psi | W_x | \psi \ra|^2 \right]$ up to additive error $\delta/2$. In the \emph{no} instance, our inverse Gowers theorem implies $\Exp \limits_{x \sim q_\Psi}\left[|\la \psi | W_x | \psi \ra|^2 \right]\leq \poly(\varepsilon_2)$ and in the \emph{yes} instance, we have $\Exp \limits_{x \sim q_\Psi}\left[|\la \psi | W_x | \psi \ra|^2 \right]\geq \varepsilon_1^6$. Setting $\delta=\poly(1/\varepsilon_1)$ appropriately, the tester can distinguish which is the case. The overall gate complexity is a factor-$n$ more than the sample complexity.

\subsection{Discussion}
\paragraph{Comparison to agnostic learning.} Recently, Chen et al.~\cite{chen2024stabilizer} considered agnostic learning stabilizer states (which is harder than tolerant testing). Here, an algorithm is given copies of $\ket{\psi}$ promised the stabilizer fidelity is at least $\tau$, needs to output a stabilizer state $\ket{\phi} \in \Sh$ which has a fidelity with $\ket{\psi}$ at least $|\la \phi | \psi \ra|^2 \geq \tau-\varepsilon$. Their agnostic learning algorithm uses $O(n\cdot (1/\tau)^{\log (1/\tau)})$ copies of $\ket{\psi}$ and time $O(n^3/\varepsilon^2\cdot (1/\tau)^{\log (1/\tau)})$. They also show a \emph{conditional} hardness  result showing that one cannot hope for a polynomial-time algorithm for $\tau=1/\poly(n)$ by assuming that a  non-standard variant of learning parity with noise is hard.  Interestingly, our note shows that if the goal was to consider the weaker task of \emph{testing} if $\ket{\psi}$ is $\tau$-close to a stabilizer state or $\poly(\tau)$-far, then one can obtain a \emph{constant-sample} (i.e., independent of $n$) and \emph{polynomial-time}~tester.

\paragraph{Acknowledgements.} Recently, Jop Briet and Jonas Helsen reached out to S.A.~and A.D.~with a proof sketch on how they might get a tolerant tester that can handle polynomial gaps building on \cite{ad2024tolerant}. Their idea was to  study the Lovasz-theta number of the graph corresponding to the subgroup $V$ and its graph products. Upon this discussion, S.A.~and A.D.~realized that they could achieve such a tolerant tester using a note shared by Sergey Bravyi much earlier. We are very grateful for this email exchange which motivated us to improve~\cite{ad2024tolerant} to obtain this note. We thank them for coordinating their submission~\cite{helsen2024} with ours on arXiv.


\bibliographystyle{alpha}
\bibliography{references}

\end{document}